\newtheorem{theorem}{Теорема}[section]
\def\pfrac#1#2{\frac{\partial #2}{\partial #1}}
\def\eq#1{\begin{equation}#1\end{equation}}
\def\eqs#1{\begin{equation}\begin{split}#1\end{split}\end{equation}}
\def\seqs#1{\begin{equation*}\begin{split}#1\end{split}\end{equation*}}
\def\seq#1{\begin{equation*}#1\end{equation*}}
\numberwithin{equation}{section}
\title{\bf On the symmetry classification of integrable chains in 3D. Darboux-integrable reductions and their higher symmetries}
\author{\bf R.N.~Garifullin, I.T.~Habibullin, 
%, А.Р.~Аглиуллин 
}
\date{} 
\begin{document}
\maketitle

%Institute of Mathematics, Ufa Scientific Center, Russian Academy of Sciences \\

\abstract {This paper proposes a method for identifying and classifying integrable nonlinear equations with three independent variables, one of which is discrete and the other two are continuous. A characteristic property of this class of equations, called Toda-type chains, is that they admit finite-field reductions in the form of open chains with enhanced integrability. The paper results in a theorem stating that all known integrable Toda-type chains admit reductions in the form of an open chain of length three with a family of second-order evolutionary type symmetries. Apparently, this property of Toda-type chains can be used as an effective classification criterion when compiling lists of integrable differential-difference equations in 3D. } 

\vspace{0.5cm}

\textbf{Key words:} {Toda-type lattices, finite field reductions, generalized symmetry, open chain }

\vspace{0.5cm}

\large

\section{Introduction}

The paper discusses the problem of creating an efficient classification algorithm for integrable equations with three independent variables of the following form:
\begin{equation} \label{todatype}
u^n_{xy} = f(u^{n+1},u^{n},u^{n-1}, u_{x}^n,u_{y}^n).
\end{equation}
It is well known that the symmetry approach developed in the works of \cite{AdlerShabatYamilov, IbragimovShabat, MikhailovShabatYamilov, SvinolupovSokolov, ZhiberShabat, SokolovBook20, ISh, GarifullinYamilovLevi17, GarifullinYamilovLevi} and proven to be a highly effective tool for classifying integrable models in 1+1 dimensions encounters significant difficulties when applied to equations with three independent variables (see, for example, \cite{MikhailovYamilov1998}). This is due to the problem of nonlocal variables involved in the symmetries of equations in 3D. Therefore, the search for alternative approaches that take into account the specifics of three-dimensional models is particularly relevant. The literature contains a series of works devoted to the study of this problem (see, for example, \cite {MikhailovYamilov1998,FerapontovKhusnutdinova2004,ShabatYamilov,XenitidisNijhoffLobb2011,RamaniGrammaticosBountis1989,AdlerBobenkoSuris2009}).

Examples of integrable three-dimensional lattices of Toda type (see  \eqref{todatype}) appeared as early as the 19th century in the works of Laplace, in connection with his cascade method for integrating linear hyperbolic equations. Later, in the works of Darboux, Goursat,  Moutard and others, it was discovered that finite systems of nonlinear hyperbolic equations, related to an infinite system (called  in modern terms the Toda lattice), are solved explicitly \cite{Darboux, GanzhaTsarev2001}. Interest in nonlinear equations of the form \eqref{todatype} was revived with the discovery of the inverse scattering transform method. A connection was revealed between classical results on methods of integrating partial differential equations and modern integrability theory (see \cite{Mikhailov79, Mikhailov81, ShabatYamilov81,LSS83}). It should be noted that the property of having two-dimensional reductions with increased integrability, discovered in classical works, is inherent not only to the Toda lattice, but also to all other known integrable chains of the form \eqref{todatype}, see \cite{HabibullinKuznetsova20}.

The idea of investigating a multidimensional integrable partial differential equation by describing the characteristic properties of its reductions with a smaller number of independent variables is well known and is actively used in applications \cite{AblowitzSegur}. A similar approach was used in the works 
\cite{Habibullin2013, HabibullinPoptsova18, Kuznetsova19, HabibullinKuznetsova20} to classify quasilinear integrable nonlinear 3D equations of the form \eqref{todatype}. There, the presence of two-dimensional Darboux-integrable reductions of arbitrary order is used as a classification criterion. More precisely, it is required that the finite-field system of hyperbolic equations obtained by imposing suitable cutoff conditions at two arbitrary points $n_1$ and $n_2$ of the  chain \eqref{todatype}  is integrable in the Darboux sense, i.e. it admits a complete sets of characteristic integrals in each of the $x$ and $y$ directions. Such reductions, called open chains, are well known, see for example \cite{ShabatYamilov81}, our observation is that all known integrable equations of the form \eqref{todatype} admit reductions in the form of open chains of arbitrary order, and all of them are Darboux integrable.
To identify Darboux-integrable systems, a convenient algebraic criterion  is used introduced in 1980 by A.B. Shabat. It is based on the concept of a characteristic Lie algebra. The criterion asserts that a system of hyperbolic equations is Darboux-integrable if and only if both of its characteristic Lie algebras, corresponding to the characteristic directions, have finite dimension (see \cite{ShabatYamilov81,	ZMHS2012}).

For quasilinear chains, i.e., chains of the following form:
\begin{equation}\label{0}
u_{n,xy}=A_1u_{n,x}u_{n,y}+A_2u_{n,x}+A_3u_{n,y}+A_4,
\end{equation}
here the coefficients are smooth functions of the dynamical variables $A_i=A_i(u_{n+1},u_n,u_{n-1})$ for $i=1,2,3,4$,
the above-mentioned classification problem was completely solved in the works \cite{Habibullin2013, HabibullinPoptsova18, Kuznetsova19, HabibullinKuznetsova20}. We present the result obtained in these works. Any  equation of class \eqref{0}, integrable in the sense of the definition given above, can be reduced by point transformations to one of the equations of the list
\begin{itemize}
\item[$1)$] $u_{n,xy} = e^{u_{n+1} - 2 u_n + u_{n-1} },$
\item[$2)$] $u_{n,xy} = e^{u_{n+1}} - 2 e^{u_n} + e^{u_{n-1}},$
\item[$3)$] $u_{n,xy} = e^{u_{n+1}-{u_n}} -  e^{u_n-u_{n-1}},$
\item[$4)$] $u_{n,xy} = \left(u_{n+1} - 2 u_n + u_{n-1}  \right) u_{n,x}, $
\item[$5)$] $u_{n,xy} = \left(e^{u_{n+1}-{u_n}} -  e^{u_n-u_{n-1}}\right)u_{n,x},$
\item[$6)$] $u_{n,xy}=\alpha_nu_{n,x}u_{n,y}, \quad \alpha_n = \frac{1}{u_n - u_{n-1}} - \frac{1}{u_{n+1}-u_n}$
%\item[$7)$] $u_{n,xy} = \alpha_n(u_{n,x} + u^2_n - 1)(u_{n,y} + u^2_n - 1) - 2 u_n(u_{n,x}+u_{n,y}+u^2_n - 1).$
\end{itemize}

In the aforementioned articles \cite{Habibullin2013, HabibullinPoptsova18, Kuznetsova19, HabibullinKuznetsova20} the authors provided a list containing one more equation:
\begin{equation}\label{wrong}
u_{n,xy} = \alpha_n(u_{n,x} + u^2_n - 1)(u_{n,y} + u^2_n - 1) - 2 u_n(u_{n,x}+u_{n,y}+u^2_n - 1),
\end{equation}
However, it turned out that this equation can be reduced to (6) by a point change of the variables. This is shown below in this paper (see (3.15), (3.17)). Note that chains (1)-(3) were obtained and studied in the classical works of Darboux, Goursat, and Moutard. Chains 4) and 5) apparently were found in the paper \cite{ShabatYamilov}. Equation 6) was found simultaneo\-usly and independently in the papers \cite{ShabatYamilov} and \cite{Ferapontov97}. A discussion of the irreversible substituti\-ons that relate the equations of the list can be found in \cite{ShabatYamilov}.

We emphasize that when considering lattices of general form, the integrability conditions derived from the finite-dimensionality property of the characteristic algebra become extremely complex. For this reason, we were unable to solve the problem of describing integrable cases of a general chain \eqref{todatype} using the method of characteristic algebras. To date, the problem remains unresolved.  Therefore, we  improve the two-stage procedure for classifying integrable chains proposed in \cite{Habibullin2013, HabibullinPoptsova18, Kuznetsova19, HabibullinKuznetsova20} as follows. The first stage, which consists of using Darboux-integrable reductions, remains unchanged. However, when identifying the property of enhanced integrability, we propose using Theorem 3.1. It states that for all known integrable cases of the Toda type lattices, the over mentioned special reductions have second-order symmetries. We propose to use the requirement of the presence of second-order symmetries for finite-field reductions of the type under consideration as a classification criterion. This will allow one to apply the well-developed technique of the symmetry approach for classification.

\section{Special (locking) boundary conditions for  integrable 3D models and open chain type reductions.}

The dynamics due to chains 4) and 6) from the list above are completely consistent with the constraints of the form
\begin{equation}\label{constraint}
u_{n_0} = C,
\end{equation}
where $C$ is an arbitrary constant. By imposing constraints of this type at two different points $n_1$ and $n_2$ on the integer line, we reduce any of these two chains to three independent systems defined on two semi-infinite segments $(-\infty, n_1-1]$, $[n_2+1, +\infty)$ and on a finite integer segment $[n_1,n_2]$. In some sources, constraints with such properties are called locking boundary conditions. The last of these systems is an open chain, integrable in the Darboux sense. For equation 4), the open chain has the form
\eqs{&u_{n_1}=C_1,\\ &u_{nxy}=(u_{n+1}-2u_{n}+u_{n-1})u_{nx},\qquad n_1+1\leq n\leq n_2-1,\\ &u_{n_2}=C_2.}
Similarly, for 6), we have
\eqs{&u_{n_1}=C_1,\\ &u_{nxy}=\alpha_nu_{nx}u_{ny},\qquad n_1+1\leq n\leq n_2-1,\\ &u_{n_2}=C_2.}
The remaining equations in the list are not consistent with the constraint \eqref{constraint}. For them, the locking boundary condition is associated with the point at infinity $u_{n_0}=\infty$, at which the right-hand sides of equations 1)--3) and 5) have an essential singularity.

Now we derive an open chain corresponding to equation 1). By changing the variables $u_{n_1}=v_{n_1}+2\ln \varepsilon$, $u_{n_2}=v_{n_2}+2\ln \varepsilon$ and also
$u_{n}=v_{n}+\ln \varepsilon$, for $n\neq n_1$, $n\neq n_2$, we reduce chain 1) to the following form
\eqs{&v_{n_1,xy}=\varepsilon e^{v_{n_1+1}-2v_{n_1}+v_{n_1-1}},\\ &v_{nxy}=e^{v_{n+1}-2v_{n}+v_{n-1}},\quad n_1+1\leq n\leq n_2-1,\\ &v_{n_2,xy}=\varepsilon e^{v_{n_2+1}-2v_{n_2}+v_{n_2-1}}.}
Let us assume that the parameter $\varepsilon$ tends to zero. In the limit, we obtain the desired open chain 
\eqs{&v_{n_1,xy}=0,\\ &v_{nxy}=e^{v_{n+1}-2v_{n}+v_{n-1}},\qquad n_1+1\leq n\leq n_2-1,\\ &v_{n_2,xy}=0. \label{2.5}}

Using this technique, we can obtain open-chain type reductions for equations 2), 3), 5). Omitting the calculations, we present only the corresponding open chains. For 2), we have
\eqs{&v_{n_1,xy}= e^{v_{n_1+1}}-2e^{v_{n_1}},\\ &v_{n,xy} = e^{v_{n+1}} - 2 e^{v_n} + e^{v_{n-1}},\quad n_1+1\leq n\leq n_2-1,\\ &v_{n_2,xy}=-2 e^{v_{n_2}}+e^{v_{n_2-1}}. \label{2.6}}

For 3) and 5) we have, respectively,
\eqs{&v_{n_1,xy}= e^{v_{n_1+1}-v_{n_1}},\\ &v_{n,xy} = e^{v_{n+1}-{v_n}} -  e^{v_n-v_{n-1}},\quad n_1+1\leq n\leq n_2-1,\\ &v_{n_2,xy}=- e^{v_{n_2}-v_{n_2-1}},}

\eqs{&v_{n_1,xy}= e^{v_{n_1+1}-v_{n_1}}v_{n_1,x},\\ &v_{n,xy} = \left(e^{u_{n+1}-{u_n}} -  e^{u_n-u_{n-1}}\right)u_{n,x},\quad n_1+1\leq n\leq n_2-1,\\ &v_{n_2,xy}=- e^{v_{n_2}-v_{n_2-1}}v_{n_2,x}.}

\section{Symmetries of open chains.}

Note that all of the open chains considered above admit complete sets of independent characteristic integrals, i.e., they are integrable in the Darboux sense, \cite{HabibullinKuznetsova20}. It is well known (see \cite{ZhiberSokolov2001}) that Darboux-integrable equations admit a wide class of higher symmetries, since there exists a mapping that takes integrals to symmetries. Therefore, the open chains listed also admit higher symmetries as well. 
Below, we prove the following assertion for equations 1)-6), and separately show that equation \eqref{wrong} reduces to 6) by a point substitution.

\begin{theorem} Open chains of length three corresponding to any of the known integrable Toda-type equations (see \eqref{todatype}) admit higher  symmetries of the order 2, that linearly depends on the second order derivaties.
\end{theorem}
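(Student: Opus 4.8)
The plan is to go through the equations $1)$--$6)$ of the Introduction one at a time, and to handle \eqref{wrong} separately via the point substitution announced above (see (3.15), (3.17)) which brings it to chain $6)$: once a symmetry of the required shape is available for $6)$, its pull-back along that substitution is the one we need. Thus the real task concerns six length-three open chains — the reductions \eqref{2.5}, \eqref{2.6} and the three displays following them in Section~2, together with the two ``finite'' open chains written there for $4)$ and $6)$ (for length three the latter simply have constant boundary values $u_{n_1}=C_1$, $u_{n_2}=C_2$ and a single interior equation). For each of them the key structural input is Darboux integrability \cite{HabibullinKuznetsova20}: a complete set of characteristic $x$- and $y$-integrals is available, and for a chain of length three these are of low order and fully explicit — for instance, after the substitution $w=v_{n_1+1}-\tfrac12(v_{n_1}+v_{n_1+2})$ the interior equation of \eqref{2.5} becomes the Liouville equation $w_{xy}=e^{-2w}$, whose characteristic integrals are $w_{xx}+w_x^2$ and $w_{yy}+w_y^2$. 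By the integral-to-symmetry correspondence \cite{ZhiberSokolov2001} each chain already carries an infinite hierarchy of higher symmetries; the content of the theorem is to single out, in each case, one of the prescribed form (and, in fact, one attached to each characteristic direction).

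To produce it I would take the evolutionary ansatz
\[
u_{n,\tau}=G_n,\qquad
G_n=\sum_{m}\bigl(\alpha_{n,m}u_{m,xx}+\beta_{n,m}u_{m,yy}\bigr)+R_n,
\]
where $m$ runs over the three indices of the open chain, $R_n$ has differential order at most one, and all of $\alpha_{n,m},\beta_{n,m},R_n$ are unknown functions of the fields $u_m$ and their first derivatives (any $u_{m,xy}$ that would appear is eliminated through the chain equation, so that the second-order dependence is indeed only on $u_{m,xx}$ and $u_{m,yy}$). Differentiating the three equations of the open chain with respect to $\tau$ gives the linearized determining equations for $G_{n_1},G_{n_1+1},G_{n_2}$. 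On the two boundary equations this reduces to $D_xD_yG_{n_i}=0$ — so $G_{n_i}$ splits into an $x$-jet part and a $y$-jet part — and one takes for $G_{n_i}$ the natural second-order symmetry of the corresponding boundary equation (the linear wave equation, or a Liouville-type one) plus a lower-order tail still to be fixed. On the interior equation the determining relation couples the three $G$'s; comparing first the coefficients of the second-order derivatives in $x$ and in $y$, and then the coefficients of the independent nonlinear factors produced by the right-hand side — $e^{v_{n+1}-2v_n+v_{n-1}}$, $e^{v_n}$, $e^{v_{n+1}-v_n}$, $u_{n,x}$, $\alpha_n$ and the like — collapses the remaining equations to a small, explicitly solvable system for the $\alpha$'s, $\beta$'s and $R_n$. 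Carrying this through for chain $1)$ gives the desired order-two symmetry, and the same computation repeated case by case gives it for $2)$--$6)$; transporting the symmetry of $6)$ through the point change from the Introduction disposes of \eqref{wrong}.

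The step I expect to be the crux is showing that this ansatz actually closes — that the determining equations do not force nonlocal variables, or derivatives of order higher than two, into $G_n$. When one applies $D_x$ or $D_y$ to a determining equation there is an a priori danger of generating ever higher $x$- or $y$-derivatives of the neighbouring fields; what prevents this is precisely the Darboux integrability of the length-three reductions — finiteness of the two characteristic Lie algebras, hence a complete supply of characteristic integrals into which every such expression can be rewritten — together with the fact that a chain of length three has a single interior equation, which keeps the coupled linear system small enough to solve in closed form. The remaining difficulty is only the uniform, case-by-case bookkeeping of the various exponential and rational nonlinearities; it is routine, but that is where the bulk of the computation sits.
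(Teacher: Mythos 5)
Your strategy coincides with the paper's: case-by-case treatment of the six length-three open chains, an evolutionary second-order ansatz whose compatibility with the hyperbolic system yields determining equations that are split with respect to second derivatives and the independent exponential/rational factors, and disposal of \eqref{wrong} by pulling back the symmetry of chain 6) through the point substitution (3.15), (3.17). The paper's Appendix carries out exactly this kind of computation for chain 5), starting from the ansatz \eqref{symg} $U_t=A(U,U_x)U_{xx}+B(U,U_x)$ and solving the splitting conditions for $A$ and $B$.

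The genuine gap is that you never discharge the step you yourself identify as the crux. The theorem is an existence statement, and the paper's proof consists precisely of the explicit solutions of the determining systems — the six families of symmetries written in terms of the first-order integrals $W_1$ (or $W_3$) and the second-order integrals $W_2$ (or $W_4$), e.g.\ \eqref{5x} for chain 5) — each of which can be verified directly. Your substitute argument, that Darboux integrability (finiteness of the characteristic algebras, completeness of the integrals) forces the ansatz to close without nonlocalities or third derivatives, is not a proof of the statement as formulated: the integral-to-symmetry correspondence of \cite{ZhiberSokolov2001} guarantees \emph{some} higher symmetries, but not that there is one of order exactly two which is linear in the second derivatives; whether such a symmetry exists depends on the specific form of the integrals in each case (for instance, for chain 1) the relevant objects are $p_x$, $r_x$ and $W=q_{xx}+\frac14(2q_x-p_x-r_x)^2$, while for chain 5) the $x$- and $y$-directions are not interchangeable and have to be treated separately, with different integrals $W_1,W_2$ versus $W_3,W_4$). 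So until the determining systems are actually solved — or the explicit symmetries exhibited and checked — the claimed conclusion for chains 1)--6) is asserted rather than established; the bookkeeping you call routine is the proof.
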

\begin{proof}
Consider the open chains of length three obtained from the equations in the list \eqref{todatype}. We introduce the vector $U=(p,q,r)^T$, whose coordinates are related to the dynamical variables $u_n$ as follows $p=u_{-1},q=u_0,r=u_1$. To prove Theorem 1.4, it suffices to present for each of the open chains of length three
a higher symmetry of the form
\eq{U_t=A(U,U_x)U_{xx}+B(U,U_x),\label{symg}}
where $A(U,U_x)$ is a square matrix of size $3\times 3$, $B(U,U_x)$ is a three-dimensional vector. The symmetry search algorithm is discussed in detail using chain 5) as an example in the Appendix (see below).

1) It follows from \eqref{2.5} that the open chain of length three for the first equation in the list has the form
\eqs{&p_{xy}=0,\\ &q_{xy}=e^{p+r-2q},\\ &r_{xy}=0.}
The sought symmetry is a system of second-order evolution equations depending on the functional parameters

\eqs{&p_t=W a_1(p_x,r_x)+p_{xx}a_2(p_x,r_x)+r_{xx}a_3(p_x,r_x)+a_4(p_x,r_x)+c_1p+c_2r,\\ &(2q-p-r)_t=((2q-p-r)_x-D_x)a_5(p_x,r_x),\\ &r_t=Wa_6(p_x,r_x)+p_{xx}a_7(p_x,r_x)+r_{xx}a_8(p_x,r_x)+a_9(p_x,r_x)+c_3p+c_4r, \\ &\quad W=q_{xx}+\frac14((2q_x-p_x-r_x)^2.}
Here $a_i$ is an arbitrary function of the variables $p_x,r_x$; $W$ is a second-order integral, and the parameters $c_i$ are arbitrary constants.

2) It follows from \eqref{2.6} that the required reduction for the second equation in the list is of the form
\eqs{&p_{xy}=e^q-2e^p,\\ &q_{xy}=e^r-2e^q+e^p,\\ &r_{xy}=-2e^r+e^q.}
It can be verified that the following system of evolutionary type equations 
\eqs{&p_t=c_1(p_x+D_x)(p_x+2q_x+r_x)+c_2p,\\
&q_t=c_1(r_{xx}-p_{xx}+q_x(r_x-q_x))+c_2q,\\ %&(p+r)_t=c_1(p-r)_x(p+2q+r)_x+c_2(p+r).\\ 
&r_t=-c_1(r_x+D_x)(p_x+2q_x+r_x)+c_2r}
provides a symmetry for the open chain above.

3) Next we present a reduction of length three for the third equation of the list and its symmetry
\eqs{&p_{xy}=e^{q-p},\\ &q_{xy}=e^{r-q}-e^{q-p},\\ &r_{xy}=-e^{r-q}.}
\eqs{&p_t=W_2a_1(W_1)+D_xa_2(W_1)+a_3(W_1)+(p_x-D_x)a_4(W_1)\\&\quad+(p+q+r)c_1+(q_xr_x+r_{xx})c_2\\
&q_t=W_2a_1(W_1)+D_x a_2(W_1)+a_3(W_1)+q_xa_4(W_1)+(p+q+r)c_1+p_xr_xc_2\\
&r_t=W_2a_1(W_1)+D_xa_2(W_1)+a_3(W_1)+(r_x+D_x)a_4(W_1)\\&\quad+(p+q+r)c_1+(p_xq_x-q_{xx})c_2,} where $a_1,a_2,a_3,a_4$ are arbitrary functions of the first-order integral $W_1=p_x+q_x+r_x$, $c_1,c_2$ are constants, and the function $W_2$ is a second-order integral. It has the form:
\seq{W_2=\frac{1}{3}(r_{xx}-p_{xx}+p_xq_x+p_xr_x+q_xr_x).}

4) Let us discuss the properties of the fourth chain reduction
\eqs{&p_{xy}=(C_1-2p+q)p_{x},\\ &q_{xy}=(p-2q+r)q_{x},\\ &r_{xy}=(q-2r+C_2)r_{x}.}
%Here we can assume that $C_1=C_2=0$, since the replacement \seq{p=\frac{3C_1}4+\frac{C_2}4+\tilde p,\ \ q=\frac{C_1}4+\frac{C_2}4+\tilde q,\ \ \ \ v=\frac{C_1}4+\frac{3C_2}4+\tilde r} converts the system to the same form with $C_1=C_2=0$. 
The corresponding  symmetry is as follows
\eqs{&p_t=c_1p_x\left(\frac{p_{xx}}{p_x}+2\frac{q_{xx}}{q_{x}}+\frac{r_{xx}}{r_{x}}\right)+c_2p_x,\\ &q_t=c_1q_x\left(-\frac{p_{xx}}{p_x}+\frac{r_{xx}}{r_{x}}\right)+c_2q_x,\\ &r_t=c_1r_x\left(-\frac{p_{xx}}{p_x}-2\frac{q_{xx}}{q_{x}}-\frac{r_{xx}}{r_{x}}\right)+c_2r_x.}

5) Below we present the symmetries of the fifth chain reduction
\eqs{&p_{xy}=e^{q-p}p_x,\\ &q_{xy}=(e^{r-q}-e^{q-p})q_{x},\\ &r_{xy}=-e^{r-q}r_{x},\label{3.10}} corresponding to each of the characteristic directions. We begin with the symmetry in the $x$ direction.
\eqs{&p_t=p_x \left(D_x a_1(W_1)+W_2a_2(W_1)+\left(\frac{q_{xx}}{q_x}+\frac{r_{xx}}{r_x}-p_x+D_x\right)a_3(W_1)+a_4(W_1)\right)+c_1,\\
&q_t=q_x \left(D_x a_1(W_1)+ W_2a_2(W_1)+\left(\frac{r_{xx}}{r_x}-\frac{p_{xx}}{p_x}-q_x\right)a_3(W_1)+a_4(W_1)\right)+c_1,\\
&r_t=r_x \left(D_x a_1(W_1)+W_2a_2(W_1)-\left(\frac{p_{xx}}{p_x}+\frac{q_{xx}}{q_x} +r_x+D_x\right)a_3(W_1)+a_4(W_1)\right)+c_1.\label{5x}}
Here $a_1,a_2,a_3$ are arbitrary functions of the first-order integral $W_1=\ln p_xq_xr_x$, parameter $c_1$ is an arbitrary constant, and the second-order integral $W_2$ is of the form:
\seq{W_2=\frac{p_{xx}}{p_x}-\frac{r_{xx}}{r_x}+p_x+q_x+r_x.}

This system is not invariant under the involution $x\leftrightarrow y$. Therefore, we separately search for its symmetry in the $y$ direction.
\eqs{
&p_\tau=D_y a_1(W_3)+W_4a_2(W_3)+(p_y-D_y)a_3(W_3)+a_4(W_3)\\&\quad+c_1(r_{yy}+r_y(q_y+e^{r-q}+e^{q-p})+e^{r-p}),\\
&q_\tau=D_y a_1(W_3)+W_4a_2(W_3)+q_ya_3(W_3)      +a_4(W_3)\\&\quad+c_1(p_yr_y+p_ye^{r-q}+r_ye^{q-p}+e^{r-p}),\\
&r_\tau=D_y a_1(W_3)+W_4a_2(W_3)+(q_y+D_y)a_3(W_3)+a_4(W_3)\\&\quad+c_1(-p_{yy}+p_y(q_y+e^{r-q}+e^{q-p})+e^{r-p}).}
Here $a_1,a_2,a_3,a_4$ are arbitrary functions of the first order integral  $$W_3=p_y+q_y+r_y+e^{r-q}+e^{q-p},$$ $c_1$ is an arbitrary constant, function  $W_4$ is a second order integral:
\seq{W_4=r_{yy}-p_{yy}+p_yq_y+p_yr_y+q_yr_y+(e^{q-p}+e^{r-q})(p_y+r_y)+e^{r-p}.}
6) Let us proceed with the last equation of the list
\eqs{&p_{xy}=\left(\frac{1}{p-C_1}-\frac{1}{q-p}\right)p_xp_y,\\ &q_{xy}=\left(\frac{1}{q-p}-\frac{1}{r-q}\right)q_xq_y,\\ &r_{xy}=\left(\frac{1}{r-q}-\frac{1}{C_2-r}\right)r_xr_y.}
%In this case, similarly to 4) the constants $C_1$ and $C_2$ are also insignificant. Point substitution of the form $$p=\frac{\tilde p-C_1}{\tilde p-C_2}C_2,\quad q=\frac{\tilde q-C_1}{\tilde q-C_2}C_2,\quad r=\frac{\tilde r-C_1}{\tilde r-C_2}C_2$$ in case $C_1\neq C_2$ or simple shift $p=\tilde p+C_1,q=\tilde q+C_1,r=\tilde r+C_1$ in case $C_1=C_2$ brings them to the values of $C_1=C_2=0$. The symmetries below are written out for these values of the constants. 
The following system is its symmetry
\eqs{&p_t=p_x D_x a_1(W_1)+p_x W_2a_2(W_1)+p_x \left( \left(\ln \frac{p_x}{W_1(p-C_1)^2}\right)_x -D_x\right)a_3(W_1)\\&\qquad+p_x a_4(W_1)+(p-C_1)(p-C_2)c_1,\\
     &q_t=q_x D_x a_1(W_1)+q_x W_2a_2(W_1)+q_x \left(\ln \frac{p_x(r-q)(r-C_2)}{r_x(p-C_1)(q-p)}\right)_x a_3(W_1) \\&\qquad    +q_x a_4(W_1)+(q-C_1)(q-C_2)c_1,\\
     &r_t=r_x D_x a_1(W_1)+r_x W_2a_2(W_1)+r_x\left(\left(\ln\frac{W_1(r-C_2)^2}{r_x}\right)_x+D_x\right)a_3(W_1)\\&\qquad+r_xa_4(W_1)+(r-C_1)(r-C_2)c_1.}
Here $a_1,a_2,a_3,a_4$ are arbitrary functions of the first-order integral $$W_1=\frac{p_xq_xr_x}{(p-C_1)(q-p)(q-r)(r-C_2)},$$ $c_1$ is an arbitrary constant. The second order integral  $W_2$ is given by:
\seq{W_2=\frac{p_{xx}}{p_x}-\frac{r_{xx}}{r_x}-\frac{2p_x}{p-C_1}+q_x\left(\frac1{q-r}-\frac1{q-p}\right)+\frac{2r_x}{r-C_2}.}

\end{proof}
Now we will show that lattice \eqref{wrong} reduces to the sixth equation of the list \eqref{todatype}.
To this end we note that replacing
\eq{u_n=\frac{1+v_n}{1-v_n}} in equation \eqref{wrong} reduces it to the form:
\eqs{v_{n,xy}=\beta_n(v_{n,x}+2v_n)(v_{n,y}+2v_n)-4v_n-2v_{n,x}-2v_{n,y},\\ \beta_n=\frac{1}{v_n-v_{n-1}}-\frac{1}{v_{n+1}-v_n}.}
Then by the non-autonomous replacement
\eq{v_n=e^{-2x-2y}w_n} one converts the latter to the equation 6) for the function $w$.

\section*{Appendix. Evaluation of symmetry \eqref{5x}.}
%\numberwithin{equation}
\setcounter{equation}{0}
\renewcommand{\theequation}{A.\arabic{equation}}

Using the example of constructing a symmetry \eqref{5x} of the system \eqref{3.10}, we illustrate in detail the algorithm for finding symmetries of open chains. We represent the system \eqref{3.10} as \eq{U_{xy}=F(U)U_{x},\label{eq5},} where $F(U)=\hbox{diag}(F_1,F_2,F_3)=\hbox{diag}(e^{q-p},e^{r-q}-e^{q-p},-e^{r-q})$ is a diagonal matrix. We recall  that the $y$-integrals of this equation, i.e. non-trivial solutions to the equations $D_yW_1=0$ and $D_yW_2=0$ have the form \seq{W_1=\ln p_xq_xr_x,}
\seq{W_2=\frac{r_{xx}}{r_x}-\frac{p_{xx}}{p_x}+p_x+q_x+r_x.}
The compatibility condition of the equations \eqref{symg} and \eqref{eq5} yields \eq{(F(U)U_{x})_t-(A(U,U_x)U_{xx}+B(U,U_x))_{xy}=0,\label{usl5}.} We open the brackets in the resulting equality and collect the coefficients in front of the highest derivative $U_{xxx}$ to get:
\eq{FA-D_y(A)-AF=0.\label{gl}} Since $F$ and $A$ do not contain the variable $U_y$, while the middle term in the last equation is of the form $D_y(A)=\sum_{i=1}^3\left(\pfrac{U_i}{A}U_{i,y}+\pfrac{U_{i,x}}{A}F_iU_{i,x}\right)$, we conclude that $A$ should not depend on $U$.
Since $F$ is a diagonal matrix, the equality \eqref{gl} in element-wise notation has the form \eq{(F_i-F_j)A_{i,j} -\sum_{k=1}^3\pfrac{U_{k,x}}{A_{ij}}F_kU_{k,x}=0,\label{sys_A}} where $i,j$ are the row and column numbers, respectively. The solution to system \eqref{sys_A} is:
\eq{A=\left(\begin{matrix}
A_{11}(W_1) & A_{12}(W_1)\frac{p_x}{q_x} & A_{13}(W_1)\frac{p_x}{r_x}\\
A_{21}(W_1)\frac{q_x}{p_x} & A_{22}(W_1) & A_{13}(W_1)\frac{q_x}{r_x}\\
A_{31}(W_1)\frac{r_x}{p_x} & A_{32}(W_1)\frac{r_x}{q_x} & A_{33}(W_1)
\end{matrix}\right),} where the entries $A_{ij}(W_1)$ are arbitrary functions of  $W_1.$
Applying the operators $\pfrac{U_{j,xx}\partial U_{k,y}}{^2},$ $j,k=1,2,3$ to the equation \eqref{usl5} yields the system \eq{\pfrac{U_{k}\partial U_{j,x}}{^2B}=0,} which shows that the vector function $B$ decomposes into sums of two functions:
\eq{B=B^{(1)}(U)+B^{(2)}(U_x).}
Application of the operators $\pfrac{U_{j,x}\partial U_{k,y}}{^2},$ $j,k=1,2,3$ to the equation \eqref{usl5} leads to the system \eq{\pfrac{U_{k}\partial U_{j}}{^2B}=0.} The latter  convinces that function $B^{(1)}$ linearly depends on in each of its arguments:
\eq{B^{(1)}=B^{(p)}p+B^{(q)}q+B^{(r)}r.}
Thus, we can conclude that equation \eqref{usl5} splits down into two parts:
\seq{e^{q-p}P_1(p,q,r,p_x,q_x,r_x,p_{xx},q_{xx},r_{xx})+e^{r-q}P_2(p,q,r,p_x,q_x,r_x,p_{xx},q_{xx},r_{xx})=0.}% where the star means multiplication. 
Vector-functions  $P_1$ and $P_2$ linearly depend on some of their arguments $p,q,r,p_{xx},q_{xx},r_{xx}$. Since linear and exponential functions are linearly independent, we require functions $P_1$ and $P_2$ to be equal to zero.
Comparing the coefficients in front of the first powers of the variables $p,q,r$, we find the vectors  $B^{(p)},B^{(q)},B^{(r)}$: $$B^{(p)}=C_p (1,1,1)^T,\quad B^{(q)}=C_q (1,1,1)^T,\quad B^{(r)}=C_r (1,1,1)^T, $$ where $C_p,C_q,C_r$ are some constants.

Let us return to the equation obtained above for finding $P_1$ and $P_2$. Equating the coefficients of the variables $p_{xx}$, $q_{xx}$, $r_{xx}$ in this equation generates systems of the first- and second-order partial differential equations with respect to the components of the vector $B^{(2)}$. The compatibility conditions of this system lead to: 
\eqs{&A_{21}(W_1)=A_{23}(W_1)-2A_{12}(W_1)+2A_{11}(W_1),\\ &A_{22}(W_1)=A_{23}(W_1)-A_{12}(W_1)+A_{11}(W_1),\\ &A_{23}(W_1)=A_{11}'(W_1)-A_{12}'(W_1)+A_{13}'(W_1)+A_{13}(W_1).} The solution $B_1^{(2)}$ is of the form:
\eq{B_1^{(2)}=p_x(r_x+q_x)(A_{12}(W_1)-A_{13}(W_1))+(A_{11}(W_1)-A_{12}(W_1))p_x^2+p_xa_4(W_1)+c_1,} where $a_4$ is an arbitrary function of the variable $W_1$ and $c_1$ is an arbitrary constant.
Since the first component of the vector $P_2$ vanishes, we find: $$C_r=C_q=0$$ and
$$B_2^{(2)}=q_x^2(A_{11}(W_1)-A_12(W_1))+q_x(p_x+r_x)(A_{12}(W_1)-A_{13}(W_1))+q_xa_4(W_1)+c_1.$$

Next, we move on to the second components of the vectors $P_1,P_2$. They have the following form:
$$P_i=P_{ip}p_{xx}+P_{iq}q_{xx}+P_{ir}r_{xx}+P_{i0}.$$
From the requirement that $P_{10}^{(2)}$ is equal to zero, we obtain $C_p=0$. From the requirement that $P_{2p}^{(2)}, P_{2q}^{(2)}, P_{2r}^{(2)}$ is equal to zero, we find:
\seqs{&A_{31}(W_1)=2A_{11}'(W_1)-4A_{12}'(W_1)+2A_{13}'(W_1)+A_{13}(W_1)-2A_{12}(W_1)+2A_{11}(W_1),\\ &A_{32}(W_1)=A_{13}(W_1)-A_{12}(W_1)+A_{31}(W_1),\\ &A_{33}(W_1)=A_{13}(W_1)-A_{11}(W_1)+A_{31}(W_1).} Equating $P_{20}^{(2)}$ to zero we obtain:
\seq{B_3^{(2)}=r_x^2(A_{11}(W_1)-A_{12}(W_1))+r_x(p_x+q_x)(A_{12}(W_1)-A_{13}(W_1))+a_4(W_1)+c_1.} 
Now it is easy to verify that the third components of the vectors $P_1$, $P_2$ are identically zero.
Renaming the functions as follows:
\seqs{&A_{11}(W_1)=a_1'(W_1)-a_2(W_1)+a_3'(W_1),\\&A_{12}(W_1)=a_1'(W_1)+a_3(W_1)+a_3'(W_1),\\ &A_{13}(W_1)=a_1'(W_1)+a_2(W_1)+a_3(W_1)+a_3'(W_1)} we obtain the formula \eqref{5x}.

\section{Conclusions.}

 It was shown in the papers \cite{Habibullin2013, HabibullinPoptsova18, Kuznetsova19, HabibullinKuznetsova20}, that all known integrable equations of the form \eqref{todatype} admit Darboux-integrable finite-field reductions, which can be obtained by terminating the chain by imposing special boundary conditions. To verify Darboux-integrability, these works used a rather complex algebraic criterion, which, in the case of a general-position system, leads to excessively cumbersome integrability conditions.

In this paper, we propose a simpler, but highly effective, symmetry criterion for identifying integrable reductions. It is based on Theorem 3.1, which states that for all known integrable two-dimensional Toda chain equations, the corresponding special finite-field reductions of length three admit second-order symmetries.

\end{document}